
\documentclass{sig-alternate-05-2015}

\RequirePackage[colorlinks,citecolor=blue,urlcolor=blue]{hyperref}
\usepackage{algorithm,algpseudocode}
\RequirePackage{amsmath}
\RequirePackage[numbers]{natbib}
\usepackage{amsfonts,amssymb,graphicx,nicefrac,mathtools}
\usepackage{enumerate, dsfont}
\usepackage{paralist}

\newtheorem{remark}{Remark}
\newtheorem{lemma}{Lemma}

\input{Definitions}

\begin{document}

\setcopyright{acmcopyright}
%
\doi{}

\isbn{}

\conferenceinfo{KDD '16}{August 13--17, 2016, San Francisco, CA, USA}

\acmPrice{\$15.00}

%

\title{Large scale multi-objective optimization: Theoretical and practical challenges}
%
%
%
%
%

\numberofauthors{3} 
%
\author{
%
%
\alignauthor
Kinjal Basu\\
       \affaddr{Department of Statistics}\\
       \affaddr{Stanford University}\\
       \affaddr{Stanford, CA USA}\\
       \email{kinjal@stanford.edu}
\alignauthor Ankan Saha\\
       \affaddr{LinkedIn Corporation}\\
       \affaddr{Mountain View, CA}\\
       \email{asaha@linkedin.com}
\alignauthor Shaunak Chatterjee \\
       \affaddr{LinkedIn Corporation}\\
       \affaddr{Mountain View, CA}\\
       \email{shchatterjee@linkedin.com}
}

\maketitle
\begin{abstract}
Multi-objective optimization (MOO) is a well-studied problem for
several important recommendation problems. While multiple approaches
have been proposed, in this work, we focus on using constrained
optimization formulations (e.g., quadratic and linear programs) to
formulate and solve MOO problems. This approach can be used to pick
desired operating points on the trade-off curve between multiple
objectives. It also works well for internet applications which serve
large volumes of online traffic, by working with Lagrangian duality
formulation to connect dual solutions (computed offline) with the
primal solutions (computed online).

We identify some key limitations of this approach -- namely the
inability to handle user and item level constraints, scalability
considerations and variance of dual estimates introduced by sampling
processes. We propose solutions for each of the problems and
demonstrate how through these solutions we significantly advance the
state-of-the-art in this realm. Our proposed methods can exactly
handle user and item (and other such local) constraints, achieve a
$100\times$ scalability boost over existing packages in R and reduce
variance of dual estimates by two orders of magnitude.
%
%

\end{abstract}

%
%
\begin{CCSXML}
<ccs2012>
<concept>
<concept_id>10003752.10003809.10003716</concept_id>
<concept_desc>Theory of computation~Mathematical optimization</concept_desc>
<concept_significance>500</concept_significance>
</concept>
<concept>
<concept_id>10003752.10003809.10003716.10011138.10011139</concept_id>
<concept_desc>Theory of computation~Quadratic programming</concept_desc>
<concept_significance>300</concept_significance>
</concept>
</ccs2012>
\end{CCSXML}

\ccsdesc[500]{Theory of computation~Mathematical optimization}
\ccsdesc[300]{Theory of computation~Quadratic programming}

%
%

%
%
\printccsdesc

\keywords{Large scale multi-objective optimization, personalization, dual conversion, variance reduction}

\section{Introduction}
\label{sec:intro}
Today, most internet applications and products use data to optimize
the user experience. With every passing year, more and more such
applications are coming into existence, and the scales of the existing
applications are increasing. The businesses operating these
applications are also growing and so are the users' expectations from
these applications. A natural manifestation of this multi-dimensional growth is through the introduction of multiple objectives (e.g.,
company's business objectives and user engagement). While a search
engine could initially focus on maximizing the click-through rate
(CTR) of its first few results, as the business grows, monetization
and user retention become additional objectives that need to be
ensured. This necessitates a framework which allows for efficiently
navigating the trade-off between such important objectives.

\begin{table*}[t]
\caption{Existing constrained optimization solution -- setup}
\label{tab:existing_setup}
 \begin{center}
 \begin{tabular}{|| l || l | l ||} 
 \hline
 \multicolumn{1}{||c||}{Stage 1} & \multicolumn{2}{|c||}{Stage 2} \\
 \hline
  &  \textbf{Offline application} & \textbf{Online application} \\
 Step 1.1: Sample a set of users, as large &  & \\
as can be handled by your QP solver   & (Batch processing) Iterate over users.  & As each user $u$ visits, use  \\
 	      	& For each user $u$, use Algorithm \ref{dual1}\protect\footnotemark
  & 
with inputs $\vec{p_u}$ (scored  \\
 						&  with inputs $\vec{p_u}$ (scored offline) and & online) and  $\vec{\mu}$ (from stage 1) to get \\
 Step 1.2: Solve the QP for the sampled  &   $\vec{\mu}$ (from stage 1) to get $\vec{x_u}$ & $\vec{x_u}$   \\
 users. Get the dual variables $\vec{\mu}$.    &    &  \\
    &  & \\

 \hline
\end{tabular}
\end{center}
\end{table*}
\addtocounter{footnote}{0} \footnotetext{Algorithm \ref{dual1} in this
  paper is presented for a slightly more advanced case. However, the
  actual algorithm that should be used is the dual to primal
  conversion (Algorithm 1) in \cite{agarwal2012personalized}. This is
  true for all references of Algorithm \ref{dual1} in the two tables
  above. We avoid copying it here due to space constraints.}

\begin{table*}[t]
\caption{New constrained optimization solution -- setup}
\label{tab:new_setup}
 \begin{center}
 \begin{tabular}{|| l || l | l ||} 
 \hline
 \multicolumn{1}{||c||}{Stage 1} & \multicolumn{2}{|c||}{Stage 2} \\
 \hline
  &  \textbf{Offline application} & \textbf{Online application} \\
  
 Step 1.1: Sample a set of users, as large &  & \\
 
 as can be handled by your QP solver.     & (Batch processing) Iterate over users. & \textbf{\textcolor{blue}{Step 2.1: (Offline) Iterate over}}    \\
 
 \textbf{\textcolor{blue}{Now we can solve 100x larger QPs}} & For each user $u$, use \textbf{\textcolor{blue}{Algorithm \ref{algo:dual2}}} & \textbf{\textcolor{blue}{the users. For each user $u$,}} \\
 
  \textbf{\textcolor{blue}{ with ADMM -- operator splitting}} &   with inputs $\vec{p_u}$ (scored offline) and & \textbf{\textcolor{blue}{use Algorithm \ref{algo:dual2} with inputs}}  \\
  
   &    $\vec{\mu}$ (from stage 1) to get $\vec{x_u}$ & \textbf{\textcolor{blue}{$\vec{p_u}$ (scored offline) and $\vec{\mu}$ }} \\
   
Step 1.2: Solve the QP for the sampled &   & \textbf{\textcolor{blue}{(from stage 1) to get $\vec{\mu_u}$}} \\

 users. Get the dual variables $\vec{mu}$ &  &  \\
 
 \textbf{\textcolor{blue}{Variance of estimated $\vec{\mu}$ is greatly}} & & Step 2.2: As each user $u$ visits, \\
 
  \textbf{\textcolor{blue}{reduced with variance reduction}}  &  & use Algorithm \ref{dual1} with inputs $\vec{p_u}$ \\
  
  \textbf{\textcolor{blue}{techniques}}     &  &  (scored online), $\vec{\mu}$ (from stage 1) \\
  
	&  &  and \textbf{\textcolor{blue}{$\vec{\mu_u}$ (obtained from step 2.1)}} \\
 
 & & to get $\vec{x_u}$ \\
 \hline
\end{tabular}
\end{center}
\end{table*}

Since we are almost into the third decade of optimizing internet
products with data, it is not surprising that several approaches have
been proposed to address this problem \cite{adomavicius2011multi,
  agarwal2015constrained, agarwal2012personalized, deb2014multi,
  konak2006multi, marler2004survey, rodriguez2012multiple} --- both in
theory and in practice. The common thread in these works is trying to
combine several objectives or criteria. In this paper, we will focus
on the constrained optimization approach \cite{agarwal2015constrained,
  agarwal2012personalized} since this provides a lot of flexibility on
the problem formulation. The relative importance of each objective need not
be pre-specified, instead the desired range of value of each
objective is specified, and the relative weights (to satisfy those constraints) are learnt. This
added flexibility in specification, combined with the scalability of
the approach, has made this a feasible solution for industrial
applications.

Industrial systems employing machine learning algorithms can be
broadly categorized into two classes --- offline systems and online
systems, where online is defined as triggered by a user visit. In
offline systems, the entire computation is done offline and the
results (e.g., best article recommendations for a user) are
pre-computed and used to serve a user when they visit. Another example
of such a system is an email delivery system where machine learning
algorithms can be used to determine both the content and
appropriateness of an email for a user. In an online system, on the
other hand, the candidate items are scored when the user visits. These
systems naturally have strict computation constraints, but also
provide better results for the same application than an offline
approach, since there are more recent signals to use.

When constraints are used to specify multi-objective optimization problems, then a two-stage approach is adopted for both offline and online applications. The two-stage approach is necessary since the problem size is too large to be solved as a whole. In Stage $1$ (which happens offline), we sample a set of users from the entire population. We then solve the constrained optimization problem for this sample, and obtain optimal duals corresponding to each constraint. This constrained optimization problem has to be a quadratic program, since linear programs cannot facilitate the dual to primal conversion which is required next. In the second stage (which happens offline for offline applications, and online for online ones), the dual estimates from Stage $1$ are used to convert each user's (or user visit's) parameters into the primal serving scheme. Details of this method are provided in \cite{agarwal2015constrained, agarwal2012personalized}. The setup is also depicted in details in Table \ref{tab:existing_setup}.


Over the past couple of years, we have been extensively building and deploying such
constrained optimization solutions for both offline and online systems for
various applications within Linkedin. From our experience, we have
identified certain practical and theoretical challenges which were
previously not addressed, but have proven crucial to the success of
those endeavors. This work identifies these challenges, and presents
solutions for them.

The first challenge is a theoretical one. The existing solutions
cannot handle any non-global constraints. However, we show a
mathematical way to handle user level constraints (i.e., local
constraints). Having solved this problem, we explore various possible
combinations of constraints spanning the entire continuum of global to
local, and formulate solution trade-offs in terms of computation time
and accuracy. We propose a mathematical formulation that can help in
deciding which parts of the problem to be solved offline, and which
parts are solved online to obtain the right trade-off between
computation time and accuracy.

The second challenge was one of scaling up a quadratic programming
(QP) solver. While this is a well-studied theoretical problem, there
isn't any documented work on deploying such systems in real-world
applications. We explore multiple methods, and share our experiences
along with intuitions on what works best and how they scale.

The third challenge we address is related to sampling. The dual
estimates obtained from Stage $1$ can have high variance, as an artifact of the sampling process. We employ
some variance reduction methods and show that they reduce the
variance of the dual estimates, and hence improve the accuracy of
the final (i.e., Stage $2$) solution which depends on those dual estimates. With
these advances, the new state of the ecosystem is briefed in Table
\ref{tab:new_setup} (all the advances are marked in bold and blue).

The rest of the paper is structured as follows. Section
\ref{sec:problem_def_global} formulates the problem and describes the
existing solution in detail. Section \ref{sec:problem_def_local}
introduces the importance of local constraints, highlights the
limitations of the existing solution in handling these and presents a
mathematically exact solution to solve the problem. We present a
graphical representation of the constraint set in Section
\ref{sec:graph_cal} and introduce a criteria to pick
which constraints to re-solve during serving time. Section
\ref{sec:scaling_QP} discusses our experience in using ADMM algorithms
to scale up a single-machine QP solver and some variance reduction
methods. Finally, we present some experiments to validate our methods
and quantify their impact in Section \ref{sec:results}.

\begin{table}[!t]
  \centering
  \caption{ \label{tab:Symbols} List of Notations used in the paper}
  \begin{tabular}{|l|r|}
    \hline
    \bf{Notation} &
    \bf{Meaning} \\    
    \hline
    $I$ &
    Set of all items to be shown \\
    \hline
    $U$ & 
    Set of all Users\\
    \hline
    $x_{ui}$ &
    Probability of showing item $i$ to user $u$.\\
    \hline
    $p_{ui}$ &
    Pr(User $u$ engaged with item $i$|$x_{ui}=1$) \\
    \hline
    $r_{ui}$ &
    Pr(User $u$ disliking/complaining\\
    & about item $i$|$x_{ui}=1$) \\
    \hline
    $\sum_{u,i}x_{ui}p_{ui}$ &
    Expected number of clicks\\
    \hline
    $\one_N,\zero_N$ & 
    Vector of 1's (0's) of dimension $N$\\
    \hline
    $\Ccal$ & 
    $[0,1]^{MN}$ \\
    \hline
  \end{tabular}
\end{table}

\vspace{-1em}
\section{The MOO Problem}
\label{sec:problem_def_global}
In this paper we are mainly interested in the recommendation problem
of showing items to users such that will maximize their engagement while
ensuring that the potential negative flags or complaints (for
simplicity, we will just refer to them as complaints hereafter) are
contained within a limit.

We begin by introducing some notation, which we will use throughout
the length of the paper. Lower bold case letters (e.g.,
$\pb, \xb$) denote vectors, while upper bold case letters (e.g.,
$\Ab$) denote matrices or linear operators. We use $p_{i,j,k}$ to
denote the appropriate index of $\pb$ (similarly for matrices) and
$\xb^T\zb = \sum_i x_i z_i$ to denote the Euclidean dot product between
$\xb$ and $\zb$. The relation $\leq$ when applied to a vector implies
element wise inequalities. Also $ \Pi_{\Ccal} (\cdot) $ denotes the
projection operator onto the set $\Ccal$ in terms of the $L_2$ norm.

Let a user be denoted as $u$ and the set of all users $U$, such that
$|U| = N$. Similarly, let an item be denoted as $i$ and
the set of items as $I$, such that $|I| = M$. The target
serving plan (which we seek to find) can be represented as $x_{ui}$,
which is the probability of showing item $i$ to user $u$. Let $p_{ui}$
denote the probability of the user $u$ engaging (by acting on it or
clicking on it) with item $i$, and $r_{ui}$ denote the probability of
user $u$ disliking item $i$ (e.g., by flagging it or complaining about
it) conditioned on the fact that the user was shown the item $i$.  A
detailed list of all symbols and their meanings is provided in Table
\ref{tab:Symbols}.
\subsection{Problem Formulation}
The aforementioned optimization problem can be written as a linear
program. Using the above notations we can write it as
\begin{equation*}
\begin{aligned}
& \underset{x}{\text{Maximize}} &  &\sum_{u,i}x_{ui}p_{ui}\\
& \text{subject to}  & & \sum_{u} \sum_{i} x_{ui}r_{ui} \leq R\\
& &  & 0 \leq x_{ui} \leq 1, \; \forall u,i,   \qquad \sum_{i}x_{ui} = 1,\; \forall u.
\end{aligned}
\end{equation*}
The last two constraints come from the fact that $x_{ui}$ is a
probability and there will be some item shown to every user. If there
is an existing serving plan that we do not want to deviate too much
from (e.g., showing the most engaging item to a user), then this can
be represented in the following way (similar
to \cite{agarwal2012personalized}). Let $q_{ui}$ represent the
existing serving plan (in the example, $q_{ui^*}$ = 1, where $i^* =
\argmax_{i} p_{ui}$). Then, we can limit the deviation of $\xb$ from $\qb$
with the following quadratic program (QP):
\begin{equation}
\label{eq:qp1}
\begin{aligned}
& \underset{x}{\text{Maximize}} & & \sum_{u,i}x_{ui}p_{ui}
- \frac{\gamma}{2} \sum_{u,i} (x_{ui} - q_{ui})^2\\ & \text{subject
to}  & & \sum_{u} \sum_{i} x_{ui}r_{ui} \leq R\\  &  & & \sum_{i}x_{ui} =
1,\; \forall u  \\& & & 0 \leq x_{ui} \leq 1, \; \forall u,i,
\end{aligned}
\end{equation}
where $\gamma$ controls the relative importance of engagement
maximization and deviation of $\xb$ from $\qb$. Note that the objective
in \eqref{eq:qp1} is concave, which would be equivalent to minimizing
the negative of the expression. The QP formulation also facilitates
some additional algorithms when we are straddling the dual and primal
spaces to work out solutions. 

This formulation can be easily extended to add objectives or
constraints to a large set of users or items. For instance, if we want
to put an additional constraint on the number of complaints from
English-speaking users (denoted by $U_{En}$), then the modified QP
would look like:
\begin{equation*}
\begin{aligned}
& \underset{x}{\text{Minimize}} & & -\sum_{u,i}x_{ui}p_{ui}
+ \frac{\gamma}{2} \sum_{u,i} (x_{ui} - q_{ui})^2\\ 
& \text{subject to} & & \sum_{u} \sum_{i} x_{ui}r_{ui} \leq R\\ 
& & & \sum_{u \in U_{En}} \sum_{i} x_{ui}r_{ui} \leq R_{En}\\ 
& & & \sum_{i}x_{ui} = 1,\; \forall u \\ 
& & & 0 \leq x_{ui} \leq 1, \; \forall u,i.
\end{aligned}
\end{equation*}

\subsection{Solving the MOO problem}
\label{sec:solving_moo}
We shall modify the technique in \cite{agarwal2012personalized} to
solve for the primal $x_{ui}$ using the dual of the problem. For
brevity, we work with problem \eqref{eq:qp1} as all the derivations
can be trivially extended to the other formulations of the problem as
well. We define the following notation first.

Let $\ab \in \RR^{MN}$ be given by $a_{ui} = p_{ui} + \gamma q_{ui}$, 
$\rb$ be the vectorized form of $r_{ui}$ and $\xb_u = (x_{u1}\hdots
x_{uM})^T$. Then the problem \eqref{eq:qp1} can be transformed into
\begin{equation}
\label{eq:origqp2}
\begin{aligned}
& \underset{\xb}{\text{Minimize}} & & - \xb^T\ab + \frac{\gamma}{2} \xb^T\xb \\
& \text{subject to} & & \xb^T\rb \leq R\\
& & & \xb_u^T\one = 1 \qquad \forall u \\
& & & \zero \leq \xb \leq \one.
\end{aligned}
\end{equation} 
Note that the second constraint can also be written in the format 
\[
 \Mb^T\xb = \one_N,
\]
where $\one_N \in \RR^N$ is the vector of all $1$'s in $N$ dimension and
$\Mb \in \RR^{MN \times N}$ is given by $M = \Ib_{N} \otimes \one_M$ and $\otimes$ denotes the Kronecker product.
Note that, we can write the Lagrangian as follows. Let
$\mu \in \RR, \nub \in \RR^N$, $\xib \in \RR^{MN}$ and
$\eta \in \RR^{MN}$ be the dual variables corresponding to the above
problem \eqref{eq:origqp2}. Thus, we have
\begin{align*}
\Lcal(\xb, \mu, \nub, \xib, \etab) &= -\xb^T\ab + \frac{\gamma}{2} \xb^T\xb + \mu(\xb^T\rb - R) \\
  & + \nub^T(\Mb^T\xb -\one_N) -\xib^T\xb + \etab^T(\xb-\one).
\end{align*}
Minimizing this with respect to $\xb$, and writing $\tilde{\Ab} = [-\rb :
-\Mb : \Ib : -\Ib]$ and $\zb = (\mu, \nub, \xib, \etab)$ we have
\begin{align*}
\xb^* &= \frac{1}{\gamma} \left(\ab - \mu\rb  - \Mb\nub + \xib - \etab \right)\\
&= \frac{1}{\gamma}\rbr{\ab + \tilde{\Ab}\zb}.
\end{align*}
Letting $\sbb = \rbr{-R, - \one_N, \zero_{NM} , - \one_{NM}}$ and plugging $\xb^*$ back into the
Lagrangian we get,
\begin{align*}
L(\xb^*, \mu, \nub, \xib, \etab) &= -\mu R - \nub^T\one_N - \etab^T\one_{NM}\\
& - \frac{1}{2\gamma}\rbr{\ab + \tilde{\Ab}\zb}^T\rbr{\ab + \tilde{\Ab}\zb}\\
&= \sbb^T\zb - \frac{1}{2\gamma}\rbr{\ab + \tilde{\Ab}\zb}^T\rbr{\ab + \tilde{\Ab}\zb}.
\end{align*}
Duplicating the equality constraint as a positive as well as a
negative constraint (using $\nub_+$ and $\nub_-$) and having
$\tilde{\zb} = (\mu, \nub_+, \nub_-, \xib, \etab),\; \tilde{\sbb}
= \rbr{-R : - \one_N : \one_N : \zero_{NM} : - \one_{NM}}$ and
$\hat{\Ab} = [-\rb : -\Mb : \Mb : \Ib : -\Ib]$, we can write our dual
problem as
\begin{equation*}
\begin{aligned}
& \underset{\tilde{\zb}}{\text{Minimize}} & & -\tilde{\sbb}^T\tilde{\zb} + \frac{1}{2\gamma}(\ab
+ \hat{\Ab}\tilde{\zb})^T(\tilde{\zb} + \hat{\Ab}\tilde{\zb})\\ 
& \text{subject to} & &\tilde{\zb} \geq 0.
\end{aligned}
\end{equation*}
Writing $\Bb
= \hat{\Ab}^T\hat{\Ab}/ \gamma,\; \tilde{\pb} = \tilde{\sbb} - \hat{\Ab}^T\ab/\gamma$,
we can re-write the above problem in the most basic form,
\begin{equation}
\label{eq:dual2}
\begin{aligned}
& \underset{\zb}{\text{Minimize}} & & \frac{1}{2}\zb^T\Bb\zb  - \tilde{\pb}^T\zb\\
& \text{subject to} & & \zb \geq 0.
\end{aligned}
\end{equation}
This problem can now be solved by any convex optimization algorithm. Large scale instances of the the above problem can be solved efficiently by the Operator Splitting algorithm \cite{parikh2014block}. For more details see Section \ref{sec:scaleup}.

\section{User-level Constraints}
\label{sec:problem_def_local}
The formulation described above provides an optimization problem that
can be solved efficiently when the constraints are global in nature
\ie\ applicable to all users simultaneously. However, we would often
want to apply different kinds of constraints for specific users or
items belonging to particular groups. Some examples of this include
applying revenue threshold constraints for sponsored items on the feed
and having different complaint rates for a certain subset of heavy
users. To the best of our knowledge, the introduction of such
user-level (local) constraints has not been explored in detail in
previous literature.

Note that the number of constraints might significantly increase with
the introduction of user-level or item-level constraints which makes
scaling up the optimization problem to large scale data a significant
challenge.  In this section, we provide a scheme to solve this problem
in two steps. In the first step, we get the global dual corresponding
to the first constraint using the technique in
Section \ref{sec:solving_moo} and using this dual we solve for $\xb$
to get the user-level local coefficients. We begin with a specific
example to explain our procedure and then we generalize to it to a
much larger class of constraints.

\subsection{Solving the primal using dual solution}
Consider a modification of the original primal
problem \eqref{eq:origqp2} so that the user $u$ can see a set of items
$i$ (for example messages) under the constraint that no user will be
shown more than $K_u$ messages. Using $\kb = \cbr{K_u}_{u=1}^N$, this
constraint can be written as $\Mb^T \xb \leq \kb$. Note that the sum
to $1$ constraint no longer holds in this case. The problem can now be
written in an analogous way by pushing some of the constraints into
the objective.
\begin{equation}
\label{eq:finalqp}
\begin{aligned}
& \underset{\xb}{\text{Minimize}} & & -\ab^T\xb + \frac{\gamma}{2} \xb^T\xb +
\II_\Ccal(\xb)\\ 
& \text{subject to} & & \xb^T\rb \leq R\\ 
& & & \Mb^T\xb \leq \kb, 
\end{aligned}
\end{equation}
where $\Ccal = [0,1]^{MN}$ and $\II_{\Ccal}(\xb) = 0$ if
$\xb \in \Ccal$ and infinity otherwise. We can write the Lagrangian
function of the problem in
\eqref{eq:finalqp} as
\begin{align*}
L(\xb, \mu, \nub) &= -\ab^T\xb + \frac{\gamma}{2} \xb^T\xb +
\II_{\Ccal}(\xb) + \mu(\xb^T\rb - R)\\
&\qquad + \nub^T(\Mb^T\xb - \kb). 
\end{align*}
After some algebra, it is easy to see at the optimal $\xb$ can be
written as
\begin{align*}
\xb^* = \Pi_{\Ccal} \left( \frac{\ab - \mu\rb  - \Mb\nub}{\gamma} \right),
\end{align*}
where $ \Pi_{\Ccal} (\cdot) $ stands for the projection into $\Ccal$.
Going into the specific user and item level, we can write the above
equation as
\begin{align}
\label{eq:xui}
x^*_{ui} = \Pi_{[0,1]} \left(\frac{ c_{ui} - \nu_u }{\gamma}\right),
\end{align}
where $c_{ui} = a_{ui} - \mu r_{ui}$ and $\nu_u$ is the coefficient of
$\nub$ corresponding to user $u$ which comes out by multiplication
with $\Mb$.


 In \cite{agarwal2012personalized}, a trick is used to eliminate the
 user level variables such that the serving plan of the next epoch can
 be calculated using just the global dual variable $\mu$ from the
 previous epoch. However
 they would be limited to handling only specific types of
 constraints. The trick we used by removing the box constraints and
 introducing the indicator variable removes extra dependency on other
 dual variables and makes it easier to recover $x_{ui}$ from $\mu$. We
 begin with a short lemma.
 
\begin{lemma}
If $c_{ui_1} \geq c_{ui_2}$, then $x^*_{ui_1} \geq x^*_{ui_2}$.
\end{lemma}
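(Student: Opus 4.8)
The plan is to exploit the fact that, for a \emph{fixed} user $u$, both $x^*_{ui_1}$ and $x^*_{ui_2}$ are produced by feeding the scalars $c_{ui_1}$ and $c_{ui_2}$ through one and the same scalar map, namely $t \mapsto \Pi_{[0,1]}\rbr{(t - \nu_u)/\gamma}$. Since the user index $u$ is held common across the two quantities, the dual coordinate $\nu_u$ and the regularization constant $\gamma$ are identical in both instances of the formula \eqref{eq:xui}, so the only quantity that varies between $x^*_{ui_1}$ and $x^*_{ui_2}$ is the input $c_{ui}$. This reduces the lemma to a one-dimensional monotonicity statement.

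First I would observe that this scalar map is a composition of two non-decreasing functions. The inner affine map $t \mapsto (t - \nu_u)/\gamma$ is increasing because $\gamma > 0$ (it is a positive regularization weight, as is required for the $1/\gamma$ expressions and the strict convexity of the objective to be meaningful), so the hypothesis $c_{ui_1} \geq c_{ui_2}$ immediately yields $(c_{ui_1} - \nu_u)/\gamma \geq (c_{ui_2} - \nu_u)/\gamma$. The outer map is the Euclidean projection onto the interval $[0,1]$, which has the explicit clamp form $\Pi_{[0,1]}(s) = \min\cbr{\max\cbr{s,\,0},\,1}$, and this clamp is monotone non-decreasing in $s$.

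Then I would simply compose the two facts: applying a non-decreasing function to both sides of the inner inequality preserves its direction, giving $x^*_{ui_1} \geq x^*_{ui_2}$ as claimed. I do not expect any genuine obstacle here, since the argument is purely a scalar monotonicity observation; the only points that require care are recording the sign condition $\gamma > 0$ and working with the explicit one-dimensional clamp form of $\Pi_{[0,1]}$ rather than the generic set projection $\Pi_{\Ccal}$ that appears in the multivariate expression for $\xb^*$.
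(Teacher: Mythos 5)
Your proposal is correct and takes essentially the same route as the paper's own proof: since $\nu_u$ is common to both items, the inequality $c_{ui_1} - \nu_u \geq c_{ui_2} - \nu_u$ is immediate, and the conclusion follows because $\Pi_{[0,1]}(\cdot)$ preserves order. The paper's argument is simply a terser version of yours; your additional remarks (the sign condition $\gamma > 0$ and the explicit clamp form of the projection) are fine but not a different approach.
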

\begin{proof}
Note that since $\nu_u$ is common between the two, it is easy to see
that $c_{ui_1} - \nu_u \geq c_{ui_2} - \nu_u$. The result follows by
observing that $\Pi_{[0,1]}(\cdot)$ maintains the order.
\end{proof}

Thus, if we sort $c_{ui_1} \geq c_{ui_2} \geq \ldots \geq c_{ui_M}$,
then $x_{ui_1} \geq x_{ui_2} \geq \ldots \geq x_{ui_M}$. This implies,
there exists $t_1$ and $t_2$ with $0 \leq t_1 < t_2 \leq M$, such that
\begin{align}
\label{x_final}
x^*_{ui} = \begin{cases}
    1, & \text{for $i = i_1 , \ldots, i_{t_1}$}.\\
    (c_{ui} - \nu_u)/\gamma, & \text{for $i = i_{t_1 + 1}, \ldots, i_{t_2}$}\\
    0, &  \text{for $i = i_{t_2 + 1}, \ldots, i_{M}$}.
  \end{cases}
\end{align}
Note further that from complementary slackness for optimality
conditions we have $\nu_u > 0$ if and only if $\sum_{i} x_{ui} =
K_u$. Thus using this equation we can solve for $\nu_u$. In fact it is
easy to see that
\begin{align}
\label{nu_eqn}
\nu_u = \frac{\gamma(t_1 - K_u) + \sum_{j = t_1 + 1}^{t_2}c_{ui_j}}{t_2 - t_1}.
\end{align}
We can now formally write the algorithm as follows. 
\begin{algorithm}[!h]
\caption{Optimal Primal from Dual}\label{dual1}
\begin{algorithmic}[1]
\State \text{Input : Optimal dual solution $\mu$ and an incoming user $u$}
\State \text{Output : Primal serving plan $\{x_{ui}\}$}
\State \text{Compute $c_{ui}$ for all $i \in I$} 
\State Sort $c_{ui}$ in decreasing order as $c_{ui_1}, c_{ui_2}, \ldots, c_{ui_M}$.
\State Generate all combinations of $(t_1, t_2)$ combinations such that $0 \leq t_1 < t_2 \leq M$.
\For {each pair $(t_1, t_2)$}
\State Compute $\nu_u$ by equation \eqref{nu_eqn}
\If {all the following conditions hold} {break from loop}
\State $(c_{ui_{t_1}} - \nu_u)/\gamma > 1$
\State $(c_{ui_{t_1 + 1}} - \nu_u)/ \gamma < 1$ and $(c_{ui_{t_2}} - \nu_u) /\gamma > 0$
\State $(c_{ui_{t_2 + 1}} - \nu_u)/ \gamma < 0$
\EndIf
\EndFor
\State Compute $x_{ui}$ using \eqref{x_final}
\State \Return $\{x_{ui}\}$.
\end{algorithmic}
\end{algorithm}
\subsection{Generalizations of local constraint}
Let us now consider a case of more general local constraints. In
\eqref{eq:origqp2} if we want to ensure that each user $u$ is shown at
most $K_u$ items of a particular type (belonging to a set $\Jcal_u$),
then an additional set of constraints (one for each user) will be
added:
\begin{equation}
\label{eq:qp3}
\begin{aligned}
& \underset{\xb}{\text{Minimize}}  & & - \xb^T\ab + \frac{\gamma}{2} \xb^T\xb \\
& \text{subject to} & & \xb^T\rb \leq R\\
& & & \sum_{i \in \Jcal_u} x_{ui} \leq K_u, \qquad \forall u \\ 
& & & \zero \leq \xb \leq \one
\end{aligned}
\end{equation}
These local constraints are given by $\Pb\xb \leq \kb$ where $\Pb \in
\RR^{N \times MN}$. However, we might be interested in more
generalized local constraints such as for $\Jcal_{u0} \subset \Jcal_u$, we
might want $\sum_{i \in \Jcal_{u0}} x_{ui} \geq n_1^u$. 
Adding such constraint to the optimization problem introduces a new
set of dual variables which we need to be eliminated to apply the above
algorithm. This may not be possible in all cases since the sign of the
combination of dual variables may not be known and hence the ordering
of $c_{ui}$ which translates to the ordering in $x_{ui}$ breaks
down. Here we present an efficient algorithm to work around
this problem. Without loss of generality, we consider the local
constraint at user level $u$ by considering $\xb_{u} \in \Kcal_u$,
where the convex region $\Kcal_u$ is defined by any sort of linear
constraints. Now we can write the optimization problem as
\begin{equation*}
\begin{aligned}
& \underset{\xb}{\text{Minimize}} & & -\ab^T\xb  + \frac{\gamma}{2} \xb^T\xb\\
& \text{subject to} & & \xb^T\rb \leq R \\
& & & \xb_u \in \Kcal_u, \; \forall u.
\end{aligned}
\end{equation*}
We transform this by introducing indicator variables as before.
\begin{equation*}
\begin{aligned}
& \underset{\xb}{\text{Minimize}} & & -\ab^T\xb + \frac{\gamma}{2} \xb^T\xb
+ \sum_{u}\II_{K_u}(x_u)\\ 
& \text{subject to} & & \rb^T\xb \leq R.
\end{aligned}
\end{equation*}
Similar to \eqref{eq:xui}, using the Lagrangian, it is easy to see
that
\begin{align*}
\xb_u = \Pi_{\Kcal_u}\left( \frac{(\ab - \mu \rb)_u}{\gamma} \right).
\end{align*}
This follows because we the can write the entire convex domain of $x$
as $\Kcal_1 \times \Kcal_2 \times \ldots \times \Kcal_n$ and each
$\xb_u$ is the projection into the corresponding $\Kcal_u$. This is
the most important step in our case, because unless we are able to
split the domain into user size, we cannot apply this
decomposition. 

Now once we know $\mu$ and the user $u$ we can calculate $\cbb_u =
\frac{(\ab - \mu\rb)_u}{\gamma}$. Then we only need to project
$\cbb_u$ into $K_u$. For most practical purposes, the number of
candidate items considered for ranking (after initial filtering based
on recency, quality and other basic filters) is at a much smaller
scale. This leads us to the following  
QP optimization problem of reasonable scale,
\begin{equation}
\label{eq:genqp_dual}
\begin{aligned}
& \underset{\xb_u}{\text{Minimize}} &  &\norm{\xb_u -   \cbb_u}_2^2\\
& \text{subject to} & &  \xb_u \in \Kcal_u,
\end{aligned}
\end{equation}
which can be solved by any QP solver. The detailed generalized
conversion technique is given next in Algorithm \ref{algo:dual2}.
\begin{algorithm}
\caption{Generalized conversion algorithm}\label{algo:dual2}
\begin{algorithmic}[1]
\State \text{Input : Optimal dual solution $\mu$ and an incoming user $u$}
\State \text{Output : Primal serving plan $\{x_{um}\}$}
\State \text{Compute $\cbb_{u} = (\ab  - \mu \rb)_u / \gamma$} 
\State Solve the problem \eqref{eq:genqp_dual} to get $\xb_u$.
\State \Return $\xb_u$.
\end{algorithmic}
\end{algorithm}

\vspace{-1em}
\section{Generalized Graph Structure}
\label{sec:graph_cal}
In this section we provide a mathematical framework based on directed
acyclic graphs (DAG) which can help in deciding what kind of local
constraints should be solved in an online setting. We begin by
describing how we obtain a graph from the set of constraints of a
general multi-objective optimization problem such
as \eqref{eq:qp3}. We then elaborate an algorithm for getting an
optimal split into offline and online problems.
\subsection{Graph Construction}
\label{sec:graph_const}
Let us begin with a few notation. Let $\xb = (x_1, \ldots, x_n)$ be
the optimization variable in $n$-dimensions. Furthermore, let $I_n
= \{1, \ldots, n\}$ and $S_j^k$ denote subsets of $I_n$ such that
$|S_j^k| = k$ for $j = 1, \ldots, s_{k}$ for some $s_{k} \geq 1$ and
$k \in \{1, \ldots, n\}$. Note that $S_1^n = I_n$. For ease of
notation, let $\pb, \cbb$ denotes some parameters in the optimization
problem, which may change in each constraint. However, for notational
simplicity we use $\pb, \cbb$. Now consider the following optimization
problem,
\begin{equation}
\label{eq:dag}
\begin{aligned}
& \underset{\xb }{\text{Minimize}} & &f(\xb)\\
& \text{subject to} & & \sum_{i \in I_n} x_i p_ i \geq c_1^n \\
& & &\sum_{i \in S_j^{n-1}} x_i p_i \geq c_j^{n-1} \text{ for } j = 1, \ldots, s_{n-1}\\
& & &\ldots\\
& & & x_i \geq c_i^{1} \text{ for } i = 1, \ldots, s_1.
\end{aligned}
\end{equation}
Here we have constraints on a subset of variables. It is easy to see
that the problem \eqref{eq:qp3} is a specific case of \eqref{eq:dag}.

Now we construct the DAG, $G = (V,E)$ as follows. Each set $S_j^k$
corresponds to a node $n_j^k$. For $\ell < k$, there is a directed
edge from $n_j^{k}$ to $n_{j'}^{\ell}$ if $S_{j'}^{\ell} \subset
S_{j}^{k}$ and there does not exist a $k'$ with $\ell < k' < k$ such
that $S_{j'}^{\ell} \subset S_{\tilde{j}}^{k'}$ for some
$\tilde{j}$. There also exists an edge from $n_j^{k}$ to
$n_{j'}^{\ell}$ if, there exists an $x \in S_{j'}^{\ell} \cap
S_{j}^{k}$ and $x \not\in S_{\tilde{j}}^{k'}$ for any
$\tilde{j} \in \{1, \ldots s_{k'} \}$ and any $k'$ with $\ell < k' <
k$.

To make think clear let us observe the following example. Let $n =
7$. Assume,
\begin{itemize}
\item $S_1^3 = \{1,2,3\}, \;\; S_2^3 = \{4,5,6\}$
\item $S_1^2 = \{1,2\}, \;\; S_2^2 = \{3,4\}, \;\; S_3^2 = \{6,7\}$
\item $S_1^1 = \{3\},\;\; S_2^1 = \{4\},\;\;S_3^1 = \{5\}.$
\end{itemize}
The corresponding graph is given in Figure \ref{fig:dag}. Note that
$S_2^2$ has two parents, $S_2^3$ has parents from two different levels
and $S_3^1$ has parent in a which skips a level. Thus any such set of
constraints can be converted into a DAG.
\begin{figure}
\centering
\includegraphics[scale = 0.6]{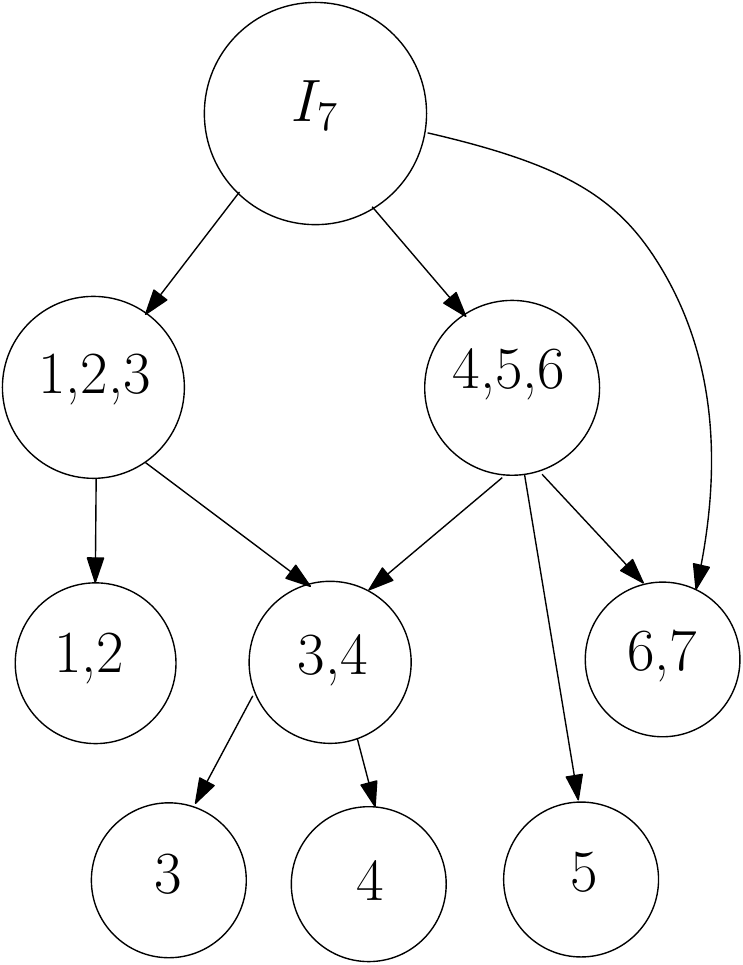}
\caption{\label{fig:dag}
The construction of a directed acyclic graph from the constraints of
an optimization problem.}
\end{figure}
\subsection{Splitting of offline and online problems}
Here, we discuss a theoretical justification for splitting the
optimization problem into an offline and online setting. The
procedure, though may have been initially thought about just as a
means to reduce computation time, we shall show here that it can have
a statistical justification. Moreover, the analysis will give us an
formal algorithm to split the optimization problem, instead of the
past heuristics.

Consider $G = (V,E)$ to be the directed acyclic graph created from the construction in Section \ref{sec:graph_const}. We assume that for each node $i$ there is a population mean $\mu_i$, and covariance structure $\Sigma_i$ for the parameter in the constraint. Moreover, we assume that we have taken $n_i$ independent samples $p_{ij}$ for $j = 1, \ldots, n_i$ from this population to create a constraint of the form $$ \sum_{j=1}^{n_i} x_{ij} p_{ij} \geq c_i.$$ Moreover, we assume that for a parent node, the population distribution is a mixture of its children distributions. Formally, the parent node has a mean and covariance given by the following result.

\begin{lemma}
\label{lem:mix_dist}
Let the parent node $p$ have $d$ child nodes each with mean and covariance $(\mub_i, \Sigmab_i)$ for $i = 1,\ldots, d$. Moreover assume that the distribution of $\pb$ is formed by mixing proportions $\alpha_i$ with $\sum_{i=1}^d \alpha_i = 1$. Then, the mean and covariance of the parent $(\mub_p, \Sigmab_p)$ is given by 
\begin{align*}
\mub_p &= \sum_{i=1}^d \alpha_i \mub_i\\
\Sigmab_p &= \sum_{i=1}^d \alpha_i \left( \Sigmab_i + (\mub_i - \mub_p)(\mub_i - \mub_p)^T\right).
\end{align*}
\end{lemma}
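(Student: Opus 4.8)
The plan is to recognize Lemma~\ref{lem:mix_dist} as the vector-valued \emph{law of total expectation} together with the \emph{law of total covariance} (the variance-decomposition identity for mixtures), and to prove it by introducing an auxiliary component-selector variable. Concretely, I would let $Z \in \{1, \ldots, d\}$ be a discrete latent index with $\PP(Z = i) = \alpha_i$, and model the mixture by declaring that, conditioned on $\{Z = i\}$, the parameter $\pb$ is distributed as the $i$-th child, so that $\EE[\pb \mid Z = i] = \mub_i$ and $\cov(\pb \mid Z = i) = \Sigmab_i$. This is exactly what ``the distribution of $\pb$ is formed by mixing the children with proportions $\alpha_i$'' means, so the whole proof reduces to conditioning on $Z$ and summing.

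For the mean, I would apply the tower property directly:
\begin{align*}
\mub_p = \EE[\pb] = \EE\big[\,\EE[\pb \mid Z]\,\big] = \sum_{i=1}^d \PP(Z=i)\,\EE[\pb \mid Z=i] = \sum_{i=1}^d \alpha_i \mub_i,
\end{align*}
which is the first claimed identity.

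For the covariance, I would invoke the total-covariance decomposition $\cov(\pb) = \EE\big[\cov(\pb \mid Z)\big] + \cov\big(\EE[\pb \mid Z]\big)$. The first (within-component) term is immediate, $\EE[\cov(\pb \mid Z)] = \sum_i \alpha_i \Sigmab_i$. For the second (between-component) term, the conditional mean $\EE[\pb \mid Z]$ is the discrete vector taking value $\mub_i$ with probability $\alpha_i$, so
\begin{align*}
\cov\big(\EE[\pb \mid Z]\big) = \sum_{i=1}^d \alpha_i \mub_i \mub_i^T - \mub_p \mub_p^T = \sum_{i=1}^d \alpha_i (\mub_i - \mub_p)(\mub_i - \mub_p)^T,
\end{align*}
where the final equality is the one genuinely computational step: expanding $(\mub_i - \mub_p)(\mub_i - \mub_p)^T$ and using $\sum_i \alpha_i = 1$ together with $\sum_i \alpha_i \mub_i = \mub_p$ collapses the cross terms. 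Adding the two contributions yields $\Sigmab_p = \sum_i \alpha_i \big(\Sigmab_i + (\mub_i - \mub_p)(\mub_i - \mub_p)^T\big)$, the second claimed identity.

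The argument has no real obstacle; the only point requiring care is the algebraic reconciliation in the between-component term, where one must track that the cross terms $-\mub_p(\sum_i \alpha_i \mub_i)^T$ and $-(\sum_i \alpha_i \mub_i)\mub_p^T$ both equal $-\mub_p \mub_p^T$ and combine with the $+\mub_p \mub_p^T$ from $\sum_i \alpha_i \mub_p \mub_p^T$ to leave a single $-\mub_p\mub_p^T$. Everything else is a direct application of conditioning on the latent mixture index.
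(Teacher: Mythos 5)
Your proposal is correct and follows essentially the same route as the paper's proof: both introduce a latent component-selector $Z$ with $\PP(Z=i)=\alpha_i$ and then apply the law of total expectation for $\mub_p$ and the law of total covariance $\VV(\pb)=\EE[\VV(\pb\mid Z)]+\VV(\EE[\pb\mid Z])$ for $\Sigmab_p$. The only cosmetic difference is that the paper realizes the mixture explicitly as $X=\sum_i Y_i\mathds{1}(Z=i)$ while you condition directly, and you spell out the expansion of the between-component term that the paper leaves implicit.
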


\begin{proof}
Let $Y_i$'s be independent samples drawn from node $i$ and $Z$ be a random variable which takes value $i$ with probablity $\alpha_i$ for $i = 1,\ldots, d$. Then $X = \sum_{i=1}^d Y_i \mathds{1}(Z = i)$ is a random variable from the parent distribution. Thus, we have
\begin{align*}
\mub_p &= \EE(X) = \EE(\EE(X|Z)) = \sum_{i=1}^d \alpha_i\mub_i
\end{align*}
and
\begin{align*}
\Sigmab_p &= \VV(X) = \VV(\EE(X|Z)) + \EE(\VV(X|Z)) \\
&= \VV\left(\sum_{i=1}^d \mathds{1}(Z = i) \mub_i \right) + \sum_{i=1}^d \alpha_i \Sigmab_i \\
&= \sum_{i=1}^d \alpha_i (\mub_i - \mub_p)(\mub_i - \mub_p)^T + \sum_{i=1}^d \alpha_i \Sigmab_i.
\end{align*}
Hence the result follows.
\end{proof}

Lemma \ref{lem:mix_dist} gives us a way to compute the variance at the parent node as a function of the child nodes. Using this result, we can compute the variance at each node $\Sigmab_i$. While the expected error accrued from using the dual estimate corresponding to node $i$ is monotonically related to $\Sigmab_i$, the actual function most likely does not have a closed analytic form. However, it can be estimated via sampling data. Let $\lambda (\frac{\Sigmab_i}{n_i}) $ denote the maximum eigenvalue of $\frac{\Sigmab_i}{n_i}$ which acts as a proxy for the error. Let $t(n_i)$ denote the time required to solve the QP under node $n_i$. Then, we would want to keep the dual estimate obtained for the node if both $t(n_i)$ is large and $\lambda (\frac{\Sigmab_i}{|n_i|})$ is small. We can use a convex combination of the two factors controlled by $w \in \sbr{0,1}$, and the resultant quantity is below a threshold $\beta$. This condition is expressed in Step \ref{step:threshold} of Algorithm \ref{algo:splitting}. 

\begin{algorithm}[!h]
\caption{Identify Stage 2 components}\label{algo:splitting}
\begin{algorithmic}[1]
\State Input : Directed acyclic graph $G = (V,E)$ formed by the optimization problem
\State Input: $w \in [0,1]$ to prioritize time required vs accuracy obtained in Stage 2
\State Input: $\beta$: threshold for stage 2
\State Output : Set of nodes with reliable dual estimates
\State Sample $N$ leaf nodes (users) to form $G_s$
\State $G_{out} = \emptyset$
\State Create a partial order $V_s^p$ of the nodes in $G_s$. Now traverse $V_s^p$ starting from the root
\For {$i = 1, \ldots, |V_s^p|$}
\State Compute $\Sigma_i$ using leaf nodes under node $n_i$
\If {$w \frac{1}{t(n_i)} + (1-w) \lambda (\frac{\Sigma_i}{|n_i|}) \leq \beta$} \label{step:threshold}
\State $G_{out} = G_{out} \cup n_i$
\EndIf
\EndFor
\State Return $G_{out}$
\end{algorithmic}
\end{algorithm}



\section{Practical challenges in scaling the solution}
\label{sec:scaling_QP}
In Sections \ref{sec:problem_def_global}, \ref{sec:problem_def_local}
and \ref{sec:graph_cal} we gave theoretical details regarding how to
handle user level constraints. Although the mathematical framework is
challenging and exciting, deploying such systems into production
raises a lot more issues. In this section, we outline two primary
challenges in this space and share our experiences in trying to
overcome them. The first challenge was to scale a QP solver to handle
a larger sample of data to get optimal dual variables -- this was
addressed by using operator splitting, an ADMM algorithm. The second
challenge was using variance reduction techniques on the sampled data
to reduce the variance of the dual estimates obtained from solving the
QP.

\subsection{Scaling up the QP Solver}
\label{sec:scaleup}

The dual problem for any optimization problem of the
form \eqref{eq:qp1} can be written as \eqref{eq:dual2}. To solve
the \eqref{eq:dual2} in an large-scale setting we employ the method of
operator splitting \cite{parikh2014block}. The generic problem for
operator splitting can be written as
\begin{equation*}
\begin{aligned}
& \underset{\zb}{\text{Minimize}} & & \phi(\zb)\\
& \text{subject to} & & \zb \in \mathcal{D}.\\
\end{aligned}
\end{equation*}
The operator splitting algorithm can be outlined as 
\begin{align*}
\zb^{k+1/2} &= \text{prox}_\phi(\zb^k  - \tilde{\zb}^k) \\ 
\zb^{k+1} &= \Pi_{\mathcal{D}}(\zb^{k+1/2} + \tilde{\zb}^k) \\
\tilde{\zb}^{k + 1} &= \tilde{\zb}^k + \zb^{k+1/2} - \zb^{k+1}, 
\end{align*}
where $\text{prox}_\phi(v) = \argmin_x\left( \phi(x) + \frac{\rho}{2}
\norm{ x - v}_2^2\right) $. In our case, the most expensive step is
evaluating the prox function. Let $\phi(\zb) = \frac{1}{2}\zb^T\Bb\zb
- \tilde{\pb}^T\zb$. Then it is easy to see that
\begin{align*}
\text{prox}_\phi(\vb) &= \argmin_\xb\left( \phi(\xb) + \frac{\rho}{2} \norm{ \xb - \vb}_2^2\right) \\
&= \argmin_\xb\left( \frac{1}{2}\xb^T\Bb\xb  - \tilde{\pb}^T\xb + \frac{\rho}{2} \norm{\xb - \vb}_2^2\right) \\
&= (\Bb + \rho \Ib)^{-1}(\tilde{\pb} + \rho \vb)
\end{align*}
Thus in our specific case \eqref{eq:dual2}, we write the algorithm as follows
\begin{align*}
\zb^{k+1/2} &= (\Bb + \rho \Ib)^{-1}(\tilde{\pb} + \rho(\zb^k  - \tilde{\zb}^k)) \\ 
\zb^{k+1} &=(\zb^{k+1/2} + \tilde{\zb}^k)_+ \\
\tilde{\zb}^{k + 1} &= \tilde{\zb}^k + \zb^{k+1/2} - \zb^{k+1}, 
\end{align*}
where $(\xb)_+ = \xb$ if $\xb \geq 0$ element-wise, else $0$. We run
these steps till convergence and return $\mu$ as the dual
variable. The true convergence is measured by the following
criteria. Define,
\begin{align*}
\rb^{k+1} = \zb^{k+1/2} - \zb^{k+1}, \qquad \sbb^{k+1} = -\rho(\zb^{k+1} - \zb^k).
\end{align*} 
where they can be regarded as the primal and dual residuals in the
algorithm. We stop the algorithm, when both the residuals are small,
i.e.,
\[
\norm{\rb^k}_2 \leq \epsilon^{pri} \;\; \text{and} \;\; \norm{\sbb^k}_2 \leq \epsilon^{dual}
\]
where the cutoff as obtained in \cite{boyd2011distributed} is given by
\begin{align*}
\epsilon^{pri} &= \sqrt{n}\epsilon^{abs} + \epsilon^{rel} \max\{ \norm{\zb^{k - 1/2}}_2, \norm{\zb^k}_2 \},\\
\epsilon^{dual} &= \sqrt{n}\epsilon^{abs} + \epsilon^{rel} \norm{ \rho \tilde{\zb}^k}_2.
\end{align*}
\begin{remark}
A smaller value of $\epsilon^{abs}$ and $\epsilon^{rel}$ will lead to
a more accurate solution but the algorithm will take much more time to
converge. To get an approximate solution, we can only check the
relative error in $\mu$. If it is below a certain threshold we stop
the algorithm. This is not exact, but we have seen that in most cases
it gives substantial improvement in convergence time.
\end{remark}
\begin{remark}
The scalability of the algorithm is dependent on finding the inverse
of the matrix $\Bb + \rho \Ib$. In most cases $\Bb$ is sparse, with
sparsity ratio $O(1/n)$ and with 64GB machine, we can find the sparse
Cholesky decomposition of $\Bb + \rho \Ib$ when $n \leq 5\times
10^7$. For larger problems, we would need a machine with more memory.
\end{remark}
\begin{remark}
We cannot use the Block Splitting algorithm \cite{parikh2014block} to
solve this problem because the matrix $\Bb$ is not separable. We could
have used it to solve the primal problem but in most cases we would
not be able to recover the dual from the primal solution.
\end{remark}
Using this technique we can solve the QP \eqref{eq:dual2} with
approximately $5 \times 10^6$ variables on a single machine with 64 GB
of RAM in a relatively small amount of time. This gives us a
$100 \times$ scale up from using the some of the off-the-solvers such
as \texttt{quadprog} in R. The convergence time for different problem
sizes is tabulated in Table \ref{tab:time}.
\vspace{-0.5em}
\begin{table}[!h]
\caption{\label{tab:time} Time taken for convergence for large scale Operator Splitting algorithm}
\begin{center}
 \begin{tabular}{||c c c c||} 
 \hline
 $n$ & Time per & Total number & Total time \\  
 &iteration & of iterations & \\[0.5ex]
 \hline\hline 
 $10^6$ & 0.165 &14135 & 38.87 (Minutes) \\ 
 \hline
 $10^7$ & 1.781 & NA  & NA \\
 \hline
 $10^8$ & 19.87 & NA  & NA \\
 \hline
\end{tabular}
\end{center}
\end{table}
For $n = 10^7$ and $n = 10^8$ each iteration takes about 1.7 and 20
seconds repectively. Thus, if we run the algorithm long enough to
converge, we estimate that the number of iterations should be about
$1.5\times 10^5$ and $1.5 \times 10^6$, resulting in a total time of
about $3$ days and a year respectively.

\subsection{Variance reduction of dual estimates}
The variation in the dual variables that we obtain are directly
related to the underlying $M$ variate distribution from which
$\pb_u, \rb_u$ is drawn for $u = 1,\ldots, N$. To get a sense of the
joint covariance structure from an actual data, we show two plots in
Figure \ref{fig:dist}. We consider a 6 variate distribution and the
first plot is the joint distribution of the top 3 most occurring
variables, i.e. they had most non-zero entries among all samples. The
second figure shows the joint distribution of the 3 least occurring
variables. We omit the axis labels to prevent disclosing sensitive information. 
\begin{figure*}
\centering
\includegraphics[width=\linewidth]{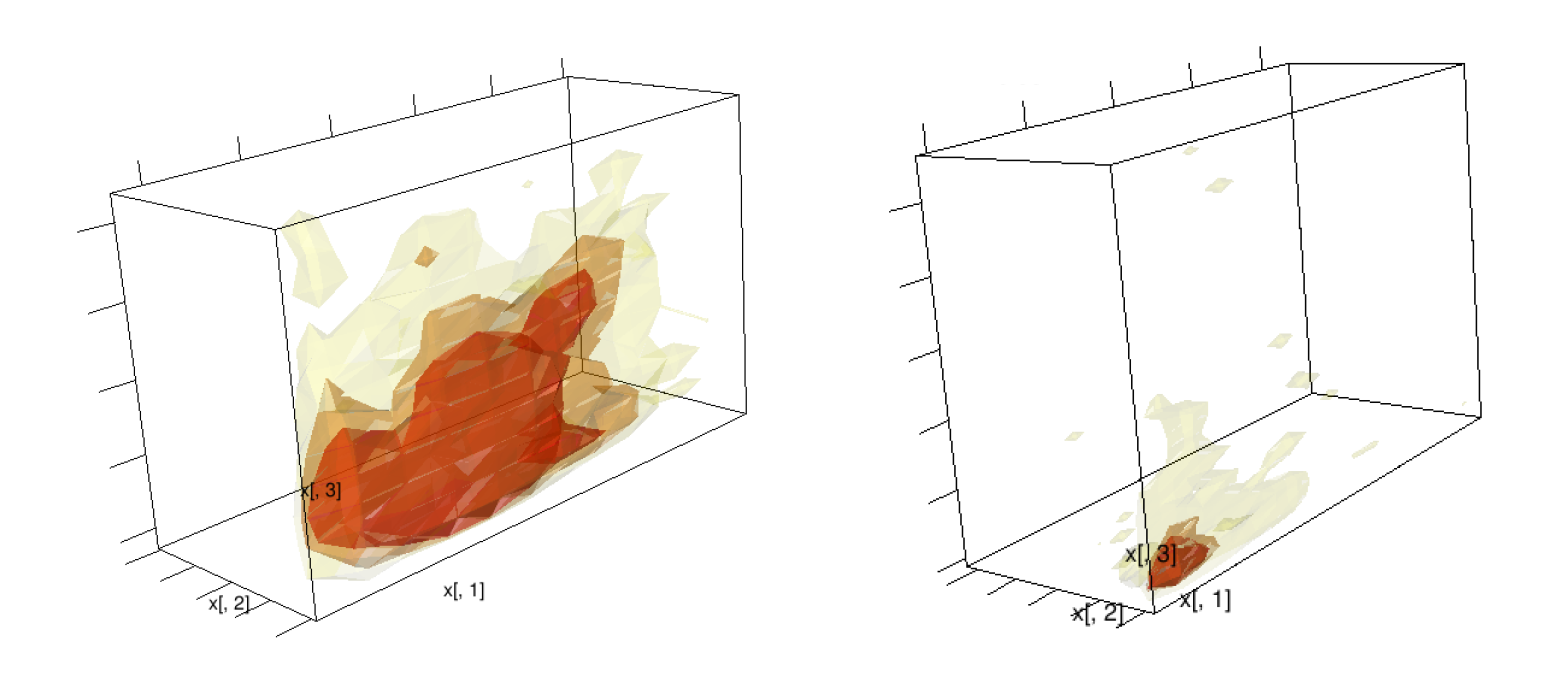}
\caption{\label{fig:dist}
The left and right panel shows the joint distribution of $p_i$
corresponding to the high and low density email keys respectively.}
\end{figure*}

Even though the non-sparse variables show a concentrated structure, we
see odd peaks for the more sparse variables. Thus, when we consider
the entire joint distribution the variation in $\mub$ is highly
dependent on the covariance structure of $\pb$ and $\rb$. From
Figure \ref{fig:dist}, it can be seen that the variance is quiet high
because of the odd peaks. This calls for certain variance reduction
techniques. Below we describe the three methods we have used and the
resulting effect in the estimate of $\mub$.
\subsubsection{Using linear moment matching}
If we try to solve for $\mub$ as a function of $\pb, \rb$, the general structure can be written as
\[
\mub = \Ab(\pb,\rb)^{-1}f(\pb,\rb)
\]
 where $\Ab(\pb, \rb)$ is a matrix and $f(\pb,\rb)$ is a vector both
 written as a function of the random variables $\pb$ and
 $\rb$. Studying the exact dependency of this function is extremely
 hard, and so we work on the heuristic that, if we can reduce the
 variation in $\pb, \rb$, we will reduce the variation in
 $\mub$. Towards that end, we use two control variates known as moment
 matching estimators. Since, we deal with $\pb$ and $\rb$ in the same
 way, we explain our technique only through $\pb$.
 
Assume the true mean and covariance of the distribution of $\pb$ is
denoted by $\mub_p$ and $\Sigmab_p$ both of which are unknown.  Using
the complete data, we estimate the mean of $\pb$, call it $\thetab
= \frac{1}{N}\sum_{u=1}^N \pb_u$ and the covariance matrix of $\pb$
call it $\Sigmab_{full} = \frac{1}{N}\sum_{u=1}^N(\pb_u
- \thetab)(\pb_u - \thetab)^T$. When we sample $n$ points $\pb_u$ from
this distribution for our problem, instead of using the sampled
$\pb_u$ in the optimization, we use two different modifications,
viz.  \begin{align} \label{mod1} \tilde{\pb}_u^1 &= \pb_u + \thetab
- \bar{\pb} &\forall \;\; u = 1,\ldots,
n\\ \label{mod2} \tilde{\pb}_u^2 &= \thetab
+ \Sigmab_{full}^{1/2}\Sigmab_{sampled}^{-1/2}\left(\pb_u
- \bar{\pb}\right) \qquad &\forall \;\; u = 1,\ldots, n \end{align}
where $\bar{\pb} = \frac{1}{n}\sum_{u=1}^n \pb_u$ and
$\Sigmab_{sampled} = \frac{1}{n}\sum_{u=1}^n(\pb_u - \bar{\pb})(\pb_i
- \bar{\pb})^T$ is the sample average and the sample covariance matrix
respectively.  Now we show few results for these two modified samples.
\begin{lemma}
For the modified sample $\tilde{\pb}_u^1$ given in \eqref{mod1} we have,
\begin{align*}
\mathbb{E}(\tilde{\pb}_u^1) &= \mub_p\\
\mathbb{V}(\tilde{\pb}_u^1) &\preceq \Sigmab_p
\end{align*}
where $\Ab \preceq \Bb$ implies $\Bb - \Ab$ is a positive semi-definite matrix.
\end{lemma}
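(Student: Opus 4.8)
The plan is to treat the $n$ subsampled vectors $\pb_1,\dots,\pb_n$ as independent draws from the true distribution (mean $\mub_p$, covariance $\Sigmab_p$), and to treat the full-data estimate $\thetab$ as a deterministic constant identified with the true mean $\mub_p$. This is the natural reading here: $\thetab$ is computed from the complete data with $N \gg n$, so it carries negligible variability relative to the subsample, and equivalently one may condition on the full data and identify its mean with $\mub_p$. Under this convention the modified sample is exactly the control-variate recentering $\tilde{\pb}_u^1 = \pb_u - \bar{\pb} + \mub_p$, so both claims reduce to computing the first two moments of $\pb_u - \bar{\pb}$.

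For the mean I would simply invoke linearity of expectation. Since $\mathbb{E}(\pb_u) = \mub_p$ and $\mathbb{E}(\bar{\pb}) = \frac1n\sum_{v=1}^n \mathbb{E}(\pb_v) = \mub_p$, the constant shift $\mub_p$ cancels the $-\bar{\pb}$ term in expectation, giving $\mathbb{E}(\tilde{\pb}_u^1) = \mub_p + \mub_p - \mub_p = \mub_p$. This establishes unbiasedness, and it is the modeling convention on $\thetab$ that makes this exact rather than merely asymptotic.

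For the variance the key point --- and the only place where care is genuinely needed --- is that $\pb_u$ is itself one of the summands in $\bar{\pb}$, so the two are positively correlated; a bound that ignored this and treated $\pb_u$ and $\bar{\pb}$ as independent would wrongly inflate the variance to $\Sigmab_p + \frac1n\Sigmab_p \succ \Sigmab_p$ and reverse the inequality. Since $\mub_p$ is constant, I would write $\mathbb{V}(\tilde{\pb}_u^1) = \mathbb{V}(\pb_u - \bar{\pb}) = \Sigmab_p + \mathbb{V}(\bar{\pb}) - 2\,\cov(\pb_u,\bar{\pb})$ and use independence to evaluate $\mathbb{V}(\bar{\pb}) = \frac1n\Sigmab_p$ and $\cov(\pb_u,\bar{\pb}) = \frac1n\Sigmab_p$ (only the $v=u$ term survives). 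The cross term then dominates and yields $\mathbb{V}(\tilde{\pb}_u^1) = \frac{n-1}{n}\Sigmab_p$. Finally $\Sigmab_p - \frac{n-1}{n}\Sigmab_p = \frac1n\Sigmab_p$, which is positive semi-definite because $\Sigmab_p$ is a covariance matrix, so $\mathbb{V}(\tilde{\pb}_u^1) \preceq \Sigmab_p$ in the L\"owner order, completing the argument. The main obstacle is thus purely the correlation bookkeeping in the cross-covariance term; once that is handled correctly the remaining steps are a short linear-algebra computation and an appeal to the positive semi-definiteness of $\Sigmab_p$.
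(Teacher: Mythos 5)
Your correlation bookkeeping for the subsample---$\mathbb{V}(\bar{\pb}) = \Sigmab_p/n$ and $\text{Cov}(\pb_u,\bar{\pb}) = \Sigmab_p/n$, so the cross term wins and pulls the variance below $\Sigmab_p$---is exactly the key step in the paper's own proof. But there is a genuine gap upstream: you analyze a different random variable than the one the lemma defines. In \eqref{mod1}, $\thetab = \frac{1}{N}\sum_{u=1}^N \pb_u$ is the full-data sample mean, a random quantity, and the lemma's claims are exact statements about $\tilde{\pb}_u^1 = \pb_u + \thetab - \bar{\pb}$, not about $\pb_u + \mub_p - \bar{\pb}$. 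Your justification for the substitution does not hold as stated: conditioning on the complete data does make $\thetab$ constant, but that constant is the realized full-data average, not $\mub_p$, so the conditional mean of $\tilde{\pb}_u^1$ is not $\mub_p$ and the ``equivalently'' is really an $N \to \infty$ approximation. Note also that no convention is needed for the mean claim at all: since $\mathbb{E}(\thetab) = \mub_p$, unbiasedness is exact with $\thetab$ random, contrary to your remark that the deterministic treatment is what makes it exact rather than asymptotic.

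The paper's proof keeps $\thetab$ random and expands all six terms of $\mathbb{V}(\pb_u + \thetab - \bar{\pb})$. With the $n$ subsampled points lying among the $N$ full-data points, one gets $\mathbb{V}(\thetab) = \Sigmab_p/N$ and $\text{Cov}(\thetab,\pb_u) = \text{Cov}(\thetab,\bar{\pb}) = \Sigmab_p/N$, so the two cross terms involving $\thetab$ cancel each other and
\begin{align*}
\mathbb{V}(\tilde{\pb}_u^1) = \Sigmab_p\left(1 + \frac{1}{N} - \frac{1}{n}\right),
\end{align*}
which is $\preceq \Sigmab_p$ precisely because $n < N$. Your answer $\frac{n-1}{n}\Sigmab_p$ is the $N=\infty$ limit of this and misses the $\Sigmab_p/N$ contribution coming from the randomness of $\thetab$; the lemma's conclusion survives only because that contribution happens to be offset (the $\thetab$ cross terms cancel, and $n < N$ by construction), which is exactly the part of the argument your convention assumes away. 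The repair requires no new idea: carry $\thetab$ as a random variable and add the three terms involving it to the expansion you already have.
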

\begin{proof}
The fact that $\tilde{\pb}_u^1$ is unbiased follows from the definition. To get the second assertion we evaluate the covariance $\tilde{\pb}_u^1$. 
\begin{align*}
\mathbb{V}(\tilde{\pb}_i^1) &= \mathbb{V}(\thetab) + \mathbb{V}(\pb_u) + \mathbb{V}(\bar{\pb}) + 2\text{Cov}(\thetab, \pb_u)\\
& -  2\text{Cov}(\pb_u, \bar{\pb}) - 2\text{Cov}(\thetab, \bar{\pb})\\
&= \frac{\Sigmab_p}{N} + \Sigmab_p + \frac{\Sigmab_p}{n} + 2\frac{\Sigmab_p}{N} - 2\frac{\Sigmab_p}{n} - 2\frac{\Sigmab_p}{N}  \\
&= \Sigmab_p \left( 1 + \frac{1}{N} - \frac{1}{n}\right)  \\
&\preceq \Sigmab_p
\end{align*}
where the last line follows from the fact that $n < N$.
\end{proof}
The second modification \eqref{mod2} does the moment matching for the
variance of $\pb$, but calculating the exact variance of this sample
is not possible in closed form. However, it can be shown that it is
asymptotically unbiased as $n \rightarrow N$, by applying the
Dominated Convergence Theorem.
\subsubsection{Using product moment matching}
Sometimes, we get much better results by using a product form of
moment matching. In this case we have,
\[
\tilde{\pb}_u^3 = \pb_i \times \frac{\thetab}{\bar{\pb}}
\]
where both the multiplication and division is done co-ordinate
wise. Boyle et. al. \cite{boyle1997monte} show that the moment
matching is asymptotically like using the known moments in control
variates. Further details regarding moment matching and control
variates can be found in \cite{mcbook}.

\section{Results}
\label{sec:results}
In this section, we present our experimental results. We first show
how the graph partitioning behaves as a function of time and accuracy.
We then show results concerning the three different moment matching
methods and how much accuracy we gain by solving the dual problem via
operator splitting.
\subsection{Graph Partitioning}
\begin{figure}[ht]
\centering
\includegraphics[width = \linewidth]{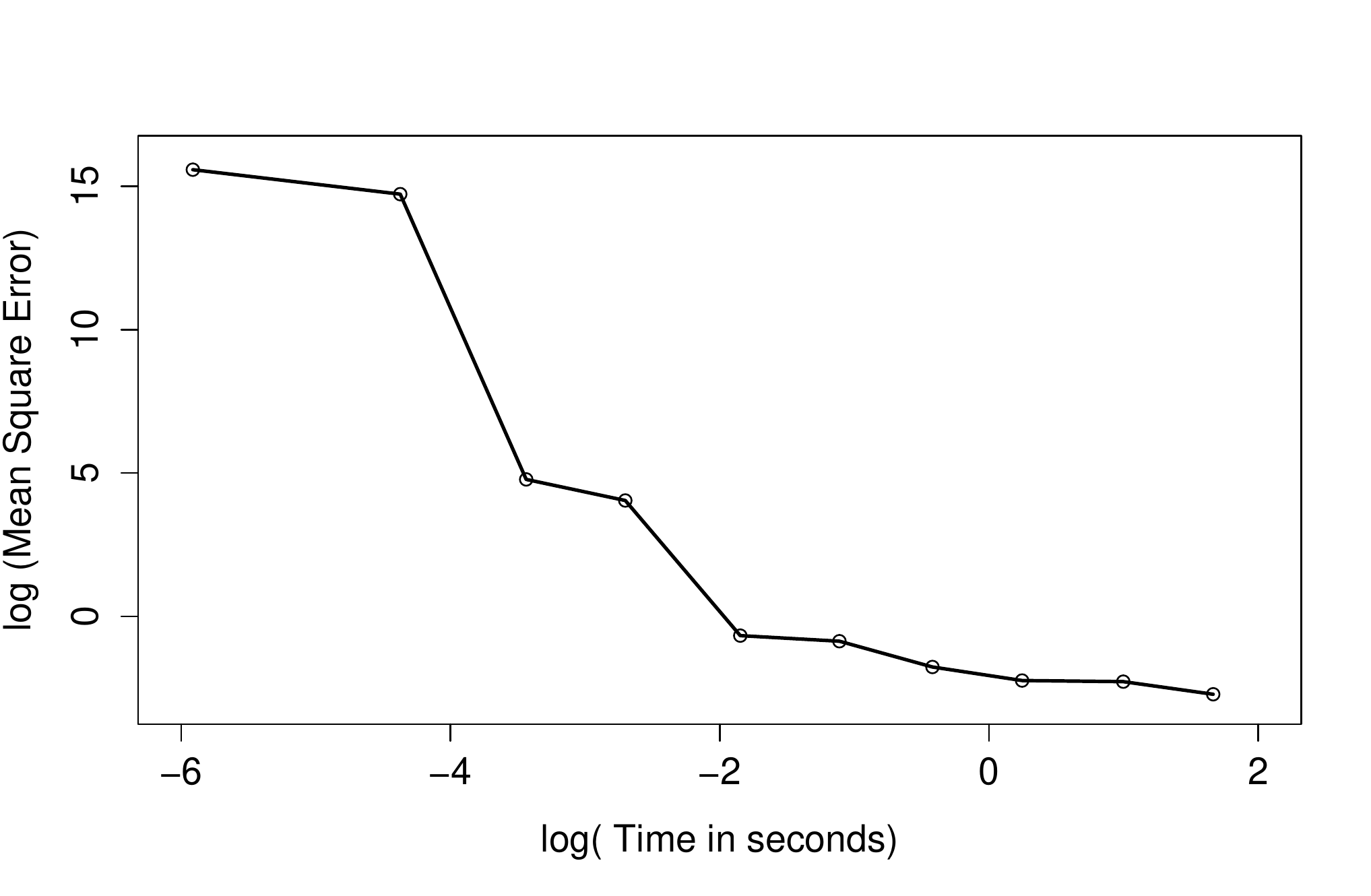}
\caption{\label{fig:time_err}The mean squared error versus computation
  time for a 10 level splitting procedure.}
\end{figure}
Consider the following toy example. Suppose we have $N = 1024$. Assume
that the constraints of the optimization problem can be written as a
binary tree having $K = 10$ levels. The root corresponds to a global
constraint. Every other node in the tree correspond to constraints on
half of the users from its parent node. The leaves are individual user
level constraints. Following the notation from Section
\ref{sec:graph_const}, we can write this as,
\begin{equation*}
\begin{aligned}
& \underset{\xb }{\text{Minimize}} & &f(\xb)\\
& \text{subject to} & & \sum_{i \in I_{1024}} x_i p_ i \geq c_1^n \\
 & \sum_{i \in S_j^{2^{10-k}}} x_i p_i \geq c_j^{2^{10-k}} & & \text{ for } j = 1,2,\ldots, 2^k \quad k = 1,\ldots,9\\
& & & x_i \geq c_i^{1} \text{ for } i = 1, \ldots, 1024.
\end{aligned}
\end{equation*}
Now we perform a split at level $k$ for $k = 1, \ldots, K$. When $k =
1$, we consider dual corresponding to only the global constraint and
solve two local problems having size 512. When $k = 2$, we consider
the duals corresponding to the first three constraints and then solve
4 local QPs having size 256. We carry on this procedure for $k = 1,
\ldots, K$. For each split, we obtain the mean square error (MSE) of
the final objective value by repeated experiments using random
samples. We also store the time taken to solve the online
problem. Figure \ref{fig:time_err} shows how the MSE decays as travel
up the tree.

Each point on the graph corresponds to a split in the tree. The
smallest computation time is when we cut the graph at $k = K$ and our
local projections are univariate problems. However, in this case, we
get least accurate solution since the only effect of the new sample is
in the final level projection resulting in low accuracy due to large
variance of $\pb$.
We also see that as we move up the tree, the accuracy increases along
with an increase in computation time. To get the optimal cut-off we
can use a weighted procedure such as Algorithm \ref{algo:splitting}.
\subsection{Large-scale Solver and Variance Reduction}
Consider the problem of sending emails pertaining to a single email
key. Suppose $r_{ui}$ is the probability that person $u$ will complain
if he is sent email $i$. Let $p_{ui}$ is the probability that there
will be a page view if email $i$ is sent and let $x_{ui}$ is the
probability that email $i$ is sent to user $u$. The problem that we
are interested in is basically to minimize the number of sends such
that the complains is reduced but page views do not suffer
much. Mathematically we can write this as,
\begin{equation*}
\begin{aligned}
& \underset{x}{\text{Minimize}} & & \xb^T\one + \frac{\gamma}{2}(\xb - \qb)^T(\xb-\qb)  \\
& \text{subject to} & & \sum_{u,i} x_{ui}r_{ui} \leq g_0 \\
&  &  &\sum_{u,i} x_{u,i}p_{u,i} \geq g_1 \\
& & & 0 \leq x_{u,i} \leq 1, \; u  = 1, \ldots, N; i = 1 ,\ldots, M.
\end{aligned}
\end{equation*}
Let $\mu_0$ and $\mu_1$ be the dual variables corresponding to the
complain and page views constraints which we are interested in. The
previous method is unable to solve this in a large scale setting. The
old infrastructure at Linkedin could solve this for about 10,000
variables to estimate $(\mu_0, \mu_1)$. Their variance estimate is
given in Table \ref{tab:var}. We use our Operator splitting algorithm
to solve the problem with a much larger data set ($\approx 10^6$
variables) on a single machine. We then perform the three variance
reduction techniques and the results are tabulated in Table
\ref{tab:var}.
\begin{table}[!h]
\caption{\label{tab:var}Comparing the Variance Reduction methods}
 \begin{center}
 \begin{tabular}{||c | c | c  | c | c||} 
 \hline
 Method & $\mu_0$ & $\mathbb{V}(\mu_0)$ & $\mu_1$ & $\mathbb{V}(\mu_1)$ \\ [0.5ex] 
 \hline\hline
 Old Method & $101.11$ & 16401.07 & 121.14 & 8.55 \\ 
 \hline
 Operator Splitting & $15.85$ & 40.04 & 120.41 & 0.42 \\
 \hline
 Using $\tilde{p}_i^1, \tilde{r}_i^1 $& 13.17 & 20.68 & 119.95 & 1.20 \\
 \hline
 Using $\tilde{p}_i^2, \tilde{r}_i^2 $& 12.96 & 12.48 & 120.21 & 0.99 \\
 \hline
 Using $\tilde{p}_i^3, \tilde{r}_i^3 $& 12.02 & 11.73 & 120.11 & 0.33 \\
 \hline
\end{tabular}
\end{center}
\end{table}
It can be observed from Table \ref{tab:var} that by just increasing
the sample size and solving a larger problem we have been able to
reduce the variance of the dual variable $\mu_0$ by about
$400\times$. The variance reduction techniques can further reduce the
variance by $\approx 4\times$. It can be seen that we get substantial
improvement in the dual estimates by following our procedure.

\vspace{1cm}
\section{Discussion}
Constrained optimization has proven to be a very successful formulation tool for several applications at Linkedin. Since most of our products aim to improve more than one metric, and (perhaps more importantly) impact several others (both positively and negatively), MOO approaches are now commonplace in various forms and flavours. Some of these applications include:

\begin{itemize}
\item Feed modeling: The Linkedin newsfeed is a very diverse distribution channel for various types of content. It surfaces articles shared by your network, job changes and anniversaries, profile updates, network updates (e.g., new connections made) among many other update types, in addition to serving advertisements. It is no surprise then that the feed ranking models have a significant impact on several objectives, including various forms of user engagement and revenue. The feed is also an application that is scored online (a user's visit triggers the scoring pipeline). The advances made in this paper can benefit the newsfeed ranking application by enabling user-level constraints like ``show no more than $3$ ads to a user in a day''. Scalability benefits are also applicable since the scale of the feed problem is in billions of data points, and both our scalability solutions can make MOO much more accurate and hence useful for the Linkedin feed.

\item Email and push notification portfolio optimization: Emails and push notifications are the most important (company-initiated) communication channel for any social network. Given the possible diversity of the channel, it is no surprise that the email and push notifications portfolio together drive a large plethora of metrics -- which makes it a perfect fit for MOO. Email portfolio optimization is a largely offline application (some parts can be user-action triggered and hence could be called borderline online). 
User level constraints are also very useful here, for instance to limit the number of emails that we send to any particular user should have a strict upper cap, so as not to overwhelm. The scale of the email problem is also into billions of data points, and hence can benefit greatly from both larger sample sizes and variance reduction.
\end{itemize}

Besides these two applications, there are many others where MOO is applicable and is either already being used (or should be used). After all, it is almost unthinkable that any application would care about just one objective. Instead of using heuristics or speculative weights, MOO is a much more principled and optimal approach to navigate the complex tradeoff.

Our contributions not only advance the theoretical state-of-the-art (by devising a way to exactly handle linear local constraints), but also make it much more scalable to obtain low variance dual estimates from $100 \times$ larger sample sizes. Together, these advances should make MOO solutions amenable for further wide-scale adoption.

\section*{Acknowledgments}
We would like to thank Prof. Art Owen for his helpful comments and support.

%
\bibliographystyle{abbrv}
\bibliography{qp}  

\begin{thebibliography}{10}

\bibitem{adomavicius2011multi}
G.~Adomavicius, N.~Manouselis, and Y.~Kwon.
\newblock Multi-criteria recommender systems.
\newblock In {\em Recommender systems handbook}, pages 769--803. Springer,
  2011.

\bibitem{agarwal2015constrained}
D.~Agarwal, S.~Chatterjee, Y.~Yang, and L.~Zhang.
\newblock Constrained optimization for homepage relevance.
\newblock In {\em Proceedings of the 24th International Conference on World
  Wide Web Companion}, pages 375--384. International World Wide Web Conferences
  Steering Committee, 2015.

\bibitem{agarwal2012personalized}
D.~Agarwal, B.-C. Chen, P.~Elango, and X.~Wang.
\newblock Personalized click shaping through lagrangian duality for online
  recommendation.
\newblock In {\em Proceedings of the 35th international ACM SIGIR conference on
  Research and development in information retrieval}, pages 485--494. ACM,
  2012.

\bibitem{boyd2011distributed}
S.~Boyd, N.~Parikh, E.~Chu, B.~Peleato, and J.~Eckstein.
\newblock Distributed optimization and statistical learning via the alternating
  direction method of multipliers.
\newblock {\em Foundations and Trends in Machine Learning}, 3(1):1--122, 2011.

\bibitem{boyle1997monte}
P.~Boyle, M.~Broadie, and P.~Glasserman.
\newblock Monte carlo methods for security pricing.
\newblock {\em Journal of economic dynamics and control}, 21(8):1267--1321,
  1997.

\bibitem{deb2014multi}
K.~Deb.
\newblock Multi-objective optimization.
\newblock In {\em Search methodologies}, pages 403--449. Springer, 2014.

\bibitem{konak2006multi}
A.~Konak, D.~W. Coit, and A.~E. Smith.
\newblock Multi-objective optimization using genetic algorithms: A tutorial.
\newblock {\em Reliability Engineering \& System Safety}, 91(9):992--1007,
  2006.

\bibitem{marler2004survey}
R.~T. Marler and J.~S. Arora.
\newblock Survey of multi-objective optimization methods for engineering.
\newblock {\em Structural and multidisciplinary optimization}, 26(6):369--395,
  2004.

\bibitem{mcbook}
A.~B. Owen.
\newblock {\em Monte Carlo theory, methods and examples}.
\newblock 2013.

\bibitem{parikh2014block}
N.~Parikh and S.~Boyd.
\newblock Block splitting for distributed optimization.
\newblock {\em Mathematical Programming Computation}, 6(1):77--102, 2014.

\bibitem{rodriguez2012multiple}
M.~Rodriguez, C.~Posse, and E.~Zhang.
\newblock Multiple objective optimization in recommender systems.
\newblock In {\em Proceedings of the sixth ACM conference on Recommender
  systems}, pages 11--18. ACM, 2012.

\end{thebibliography}
\end{document}